\newtheorem{assumption}{Assumption}
\newtheorem{proposition}{Proposition}
\title{\LARGE \bf
	Active exploration in adaptive model predictive control*
}
\author{Anilkumar Parsi, Andrea Iannelli and Roy S. Smith%
\thanks{* This work is supported by the Swiss National Science Foundation under grant number 200021\_178890. The authors are with the Automatic Control Laboratory, ETH Zurich, Switzerland. %
 {\tt\small \{aparsi,iannelli,rsmith\}@control.ee.ethz.ch }%
}
}
\begin{document}
		
\maketitle

\begin{abstract}                
A dual adaptive model predictive control (MPC) algorithm is presented for linear, time-invariant systems subject to bounded disturbances and parametric uncertainty in the state-space matrices. Online set-membership identification is performed to reduce the uncertainty and thus control affects both the informativity of identification and the system's performance. The main contribution of the paper is to include this dual effect in the MPC optimization problem using a predicted worst-case cost in the objective function. This allows the controller to perform active exploration, that is, the control input reduces the uncertainty in the regions of the parameter space that have most influence on the performance. Additionally, the MPC algorithm ensures robust constraint satisfaction of state and input constraints. Advantages of the proposed algorithm are shown by comparing it to a passive adaptive MPC algorithm from the literature.
\end{abstract}		
\section{Introduction} 
Adaptive control is a technique where the controller parameters are updated using measurement data. Conventional methods of adaptive control like gain scheduling are based on certainty equivalence, and neglect the model uncertainties \cite{sastry2011adaptive}. For this reason, they cannot handle constraints on states and inputs of the system. Model predictive control (MPC) is a popular technique since it guarantees stability and constraint satisfaction under uncertainty \cite{rawlings2009model}. The structure of MPC controllers facilitates easy integration of model adaptation into the controller. Utilizing this advantage, a variety of adaptive MPC schemes have been proposed in the recent past using different model structures (state-space, impulse response, etc.) and adaptation methods (set-membership identification, recursive least-squares, etc.)  \cite{kim2008adaptive,tanaskovic2014adaptive,lorenzen2017,heirung2017dual}.

An adaptive MPC algorithm using set-membership identification is proposed in \cite{tanaskovic2014adaptive}, which ensures robust constraint satisfaction for systems affected by bounded measurement noise. An extension of this algorithm was proposed in \cite{parsi2019robust}, where a worst-case cost is used to improve robustness of the performance. The algorithm uses an impulse response model which depends on a large number of parameters. An alternative method has been proposed in \cite{lorenzen2017} which uses uncertain state-space models subject to bounded disturbances. Using tube-MPC, the algorithm ensures robust constraint satisfaction while reducing the uncertainty online. However, in all these methods the adaptation is passive, that is, the MPC optimizer does not exploit the fact that identification and control are being simultaneously performed.

These disadvantages can be addressed using dual control \cite{Wittenmark95adaptivedual}, a technique which computes control inputs under decision relevant, reducible uncertainty. An optimal dual control problem can be formulated by  modeling the dependence of uncertainty reduction on the control inputs. The solution to this problem is given by dynamic programming, whose computational complexity is high \cite{mesbah2018stochastic}. Instead, the existing dual control algorithms approximate the optimal control problem using heuristics to add a probing effect on the control input. In \cite{lorenzen2019}, an adaptive MPC algorithm is presented with a constraint on the control input to ensure persistent excitation. Though parameter convergence is guaranteed, this method could result in excessive probing, especially after the uncertainty is reduced.   In \cite{weiss2014robust}, the parameter error covariance is penalized in the objective function of MPC. However, a lower covariance of the parameter uncertainty might not always translate into improved performance, and the cost function requires tuning the probing effect. These problems can be mitigated by using an application-oriented approach to dual control \cite{larsson2016application},  \cite{Iannelli2020}. Here the probing effect is induced by using a measure of the robust performance, such as worst-case cost, instead of geometric measures of uncertainty. This ensures that active exploration is performed, that is, the uncertainty is reduced in regions of parameter space to improve the control performance.

The main contribution of this paper is to formulate active exploration in a dual adaptive MPC framework. For this purpose, the regulation of a linear, time-invariant system with affine uncertainty in the state space matrices is considered. The system is subject to bounded disturbances and must satisfy state and input constraints. Using an approach similar to \cite{aswani2013provably}, the problems of feasibility and learning are decoupled by using two state tubes. A robust state tube is used to ensure feasibility for all model parameters in an initial parameter set, which can be large. A predicted parameter set is then defined as a function of the control input, and a predicted state tube is constructed such that it is robust to uncertainties in predicted parameter set. The cost function is defined as the worst-case cost over the predicted state tube. The algorithm requires a non-convex optimization problem to be solved online. The algorithm has the flexibility to trade-off between the computational complexity and quality of active exploration using the length of the predicted state tube. The performance of the algorithm with varying predicted state tube lengths is compared to a passive adaptive MPC algorithm using numerical simulations. 


\subsection{Notation}
The sets of real numbers and non-negative real numbers are denoted by $ \mathbb{R} $ and $ \mathbb{R}_{\ge0} $ respectively. The sequence of integers from $ n_1 $ to $ n_2 $ is represented by $ \mathbb{N}_{n_1}^{n_2} $. For a vector $ b $, $ b^{\intercal} $ represents its transpose, and $ [b]_i $ refers to its $ i ^{th}$ element. The $ i ^{th}$ row of a matrix $ A $ is denoted by $ [A]_{i} $. The dimensions of matrices and vectors are not explicitly specified when they can be inferred from the context. For any real scalar-valued function $ J $, $ \displaystyle\max_{h \in \mathbb{H}} J(h)$ refers to the maximum value of $ J $ over the set $ \mathbb{H} $. The Minkowski sum of two sets $ A$ and $ B $ is denoted by $ A \oplus B $, and $ \mathbf{1} $ denotes a column vector of 
appropriate length whose elements are equal to 1.  The convex hull of the elements of a set S is represented by co\{S\}. The notation $ a_{l|k} $ denotes the value of $ a $ at time step $ k+l $ predicted at the time step $ k $.

\section{Problem configuration}
\subsection{System description}
We consider a discrete time, linear time-invariant system with state $x_k \in \mathbb{R}^n$, control input $u_k \in \mathbb{R}^m$ and disturbance $w_k \in \mathbb{W} \subset \mathbb{R}^n$ at the time step $k$. The system dynamics can be described according to the parametric equation
\begin{equation} \label{eq:Dynamics}
    x_{k+1} = A(\theta) x_k + B(\theta) u_k + w_k,
\end{equation}
where $\theta \in \mathbb{R}^p$ is an unknown, constant parameter and $\theta^*$ is its true value. It is assumed that all the state variables are measurable. The state matrices are parameterized as
\begin{align} \label{eq:Parameterization}
\begin{split}
    A(\theta) = A_0 + \displaystyle\sum_{i=1}^{p} A_i [\theta]_i, \quad  B(\theta) = B_0 + \displaystyle\sum_{i=1}^{p} B_i [\theta]_i,
\end{split}
\end{align}
and  $\theta$ belongs to the bounded polytope
\begin{equation}\label{eq:ParameterBounds}
    \Theta := \{\theta \in \mathbb{R}^p | H_{\theta} \theta \le h_{\theta} \},
\end{equation}
such that $ \theta^* \in \Theta$ and $H_{\theta} \in \mathbb{R}^{n_\theta \times p}  $. The states and inputs of the system must satisfy the constraints
\begin{equation} \label{eq:Constraints}
\mathbb{Z} = \left\{(x_k,u_k) \in \mathbb{R}^n \times \mathbb{R}^m \bigr|  F x_k + G u_k \le \mathbf{1}\right\},
\end{equation} 
where $ \mathbb{Z} $ is a compact set and $ F \in \mathbb{R}^{n_c \times n} $. The objective is to regulate the system state from the initial condition $x_0$ to the origin, while robustly satisfying the constraints in \eqref{eq:Constraints}. 
\begin{assumption}\label{As:Disturbance}
The disturbance set $\mathbb{W}$ is a bounded polytope described by the $ n_w $ constraints in the set
\begin{equation}
    \mathbb{W} = \{w \in \mathbb{R}^n | H_w w \le h_w \}.
\end{equation}
\end{assumption}

\subsection{Online set-membership identification}
Set-membership identification is a technique used to identify systems affected by bounded noise with unknown statistical properties \cite{milanese1991}. The identification procedure defines a feasible parameter set (FPS), which contains the set of all parameters to be robust against. The FPS is initialized with $ \Theta $ and updated at each time step $ k $ to $ \Theta_k $. To perform the update, a set of non-falsified parameters is constructed using measurement data from the previous $ s$ time steps as

\begin{align} \label{eq:SimpleNonfalsified}
\begin{split}
\Delta_{k} &:= \biggr\{
\theta \in \mathbb{R}^{p}\: \biggr|x_{t+1} {-} A(\theta) x_t {-} B(\theta) u_t \in \mathbb{W},\: \forall t \in \mathbb{N}_{k-s}^{k-1}
\biggr\}\\ 
&=\biggr\{
\theta \in \mathbb{R}^{p}\: \biggr| \: -H_w D_t \theta  \le h_w+H_w d_{t+1}, \forall t \in \mathbb{N}_{k-s}^{k-1}
\biggr\}\\
&= \left\{\theta \in \mathbb{R}^{p}\: |\: H_\Delta \theta \le h_\Delta \right\},
\end{split}
\end{align}
where $D_t\in\mathbb{R}^{n\times p}$ and $d_{k+1}\in \mathbb{R}^{n}$ are 
\setlength\arraycolsep{2pt}
\begin{align}\label{eq:Dtdt}
\begin{split}
    D_t &:= D(x_t,u_t) = \begin{bmatrix}
         A_1 x_t {+} B_1 u_t, & \ldots, & A_p x_t {+} B_p u_t
    \end{bmatrix},\\
    d_{t+1} &:= A_0 x_t + B_0 u_t - x_{t+1}, \qquad \forall t \in \mathbb{N}_{k-s}^{k-1}.
\end{split}
\end{align}
Note that $ D_t $ and $ d_t $ are quantities which linearly depend on the measured state and input vectors, but the dependence is omitted for clarity. This notation is adopted so that $ \Delta $ can be represented by hyperplane constraints in $ \mathbb{R}^p $.

In \eqref{eq:SimpleNonfalsified}, the non-falsified set $\Delta_{k}$ defines the set of all parameters that could have generated the measurement sequence $ \{x_{k-s},\ldots,x_{k}\}$. The set $ \Theta_k $ is defined using a fixed number of polytopic constraints given by
\begin{equation} \label{eq:Theta_k_def}
\Theta_k :=  \{\theta \in \mathbb{R}^p| H_{\theta} \theta\le h_{\theta_k}\}.
\end{equation}
The matrix $ H_\theta $ is chosen offline and $ h_{\theta_k}$ is updated online such that 
\begin{equation}\label{eq:Theta_k_update}
 \Theta_k \supseteq \Theta_{k-1} \cap \Delta_{k} 
\end{equation}
is satisfied. This is ensured by calculating $  h_{\theta_k} $ as a solution to the following set of linear programs:
\begin{align}\label{eq:Theta_k_LP}
\begin{split}
[h_{\theta_k}]_i \: =\: &\max_{\theta\in \mathbb{R}^{p}} \quad  [H_\theta]_{i} \theta\\
 &\text{s. t. } \quad  
 \begin{bmatrix}
    H_\theta \\ H_\Delta
\end{bmatrix} \theta
 \le \begin{bmatrix}
  h_{\theta_{k-1}} \\ h_{\Delta}
 \end{bmatrix}, \quad i = 1,2,\ldots,n_\theta .\\
\end{split}
\end{align}

\section{Robust state tube and constraints}
\subsection{Tube MPC}
To ensure robust constraint satisfaction, the tube MPC approach proposed in \cite{langson2004} is used. The prediction horizon of the MPC problem is $ N $, and the control input is parameterized using a feedback gain $ K $ as 
\begin{equation}\label{eq:InputParameterization}
u_{l|k} = Kx_{l|k} + v_{l|k},
\end{equation}
where $ \{v_{l|k}\}_{l=0}^{N-1} $ are decision variables in the MPC optimization problem.
\begin{assumption}\label{As:Feedback}
The feedback gain $ K $ is chosen such that $ A_{\text{cl}}(\theta)  = A( \theta) + B(\theta)K$ is asymptotically stable $ \forall \theta \in \Theta $. 
\end{assumption}
The gain $ K $ can be computed using standard robust control techniques, for example, following the approach in \cite{kothare1996}.

\noindent A state tube is defined using the set-based dynamics
\begin{subequations}\label{eq:SetDynamics}
\begin{align}
\mathbb{X}_{0|k} &\ni \{x_{k}\},  \label{eq:initialSet}\\
\mathbb{X}_{l+1|k} &\supseteq A(\theta)\mathbb{X}_{l|k}  \oplus B(\theta)u_{l|k} \oplus \mathbb{W} , \label{eq:SetInclusions}\\
& \qquad  \forall \theta \in \Theta_k, \quad l = 0,1,\ldots,N-1, \nonumber
\end{align}
\end{subequations}
which ensures that $ x_{l|k} \in \mathbb{X}_{l|k} $ for all the realizations of uncertainty and disturbance. The tube cross-section at each time step, $ \mathbb{X}_{l|k} $, is parameterized by translation and scaling of the set 
\begin{align}
 \mathbb{X}_0 := \{x| H_x x \le \mathbf{1}\} = \text{co}\{x^{1},x^{2},\ldots,x^{v}\} ,
\end{align}
where the vertices $\{x^{1},x^{2},\ldots,x^{v}\} $  and the matrix $ H_x \in \mathbb{R}^{n_x\times n} $ are computed offline. The variables $ z_{l|k} \in \mathbb{R}^{n}$ and $ \alpha_{l|k} \in \mathbb{R}_{\ge0}$ define the translation and scaling of $ \mathbb{X}_{0} $ respectively, and are decision variables in the MPC optimization. Then, for $ l = \mathbb{N}_{0}^{N}, $ the state tube is parameterized as
\begin{align}\label{eq:StateTubeParameterization}
\begin{split}
\mathbb{X}_{l|k} &= \{z_{l|k}\} \oplus \alpha_{l|k} \mathbb{X}_{0} \quad = \{x| H_x (x-z_{l|k}) \le \alpha_{l|k} \mathbf{1}\}\\
&= \{z_{l|k}\} \oplus \alpha_{l|k} \text{co}\{x^{1},x^{2},\ldots,x^{v}\}.
\end{split}
\end{align}

\subsection{Reformulation of constraints}
The state and input constraints defined in \eqref{eq:Constraints} and the set dynamics proposed in \eqref{eq:SetDynamics} must be robustly satisfied for all $ \theta \in \Theta_k $ and disturbances in $\mathbb{W}$. To reformulate these in a convex manner, the following notation is defined
\begin{equation}\label{eq:x_jlk}
\begin{array}{rll}
x_{l|k}^{j} &= z_{l|k} + \alpha_{l|k} x^{j}, \quad &d_{l|k}^{j} = A_0 x_{l|k}^{j} + B_0 u_{l|k}^{j} - z_{l+1|k},  \\
u_{l|k}^{j} &= Kx_{l|k}^{j} + v_{l|k}, \quad & D_{l|k}^{j} = D(x_{l|k}^{j},u_{l|k}^{j}),\\
\end{array}
\end{equation}
where $  j \in \mathbb{N}_{1}^{v} ,l \in \mathbb{N}_{0}^{N-1} $. Note that unlike the definition in \eqref{eq:Dtdt} where $ D_t,d_t $ are a function of known states and inputs, the quantities $ D_{l|k}^{j}, d_{l|k}^{j} $ linearly depend on the decision variables of MPC.  Additionally, the vectors $ \bar{f}$ and $\bar{w} $ are computed offline such that for $  i \in \mathbb{N}_{1}^{n_c},  j \in \mathbb{N}_{1}^{n_x}  $
\begin{equation}
\begin{split}
 [\bar{f}]_{i} &= \displaystyle\max_{x\in \mathbb{X}_0} [F+GK]_i x,\quad [\bar{w}]_{j} = \displaystyle\max_{w\in \mathbb{W}} \: [H_x]_j w.
\end{split}
\end{equation}
The following proposition from \cite{lorenzen2017} reformulates the robust constraints and set-dynamics  as linear equality and inequality constraints.
\begin{proposition}\label{Pr:SetDynamics}
	Let the state tube $ \{\mathbb{X}_{l|k}\}_{l=0}^{N} $ be parameterized according to \eqref{eq:StateTubeParameterization}. Then, the constraints \eqref{eq:Constraints} and set-dynamics \eqref{eq:SetDynamics} are satisfied if and only if $\forall  j{\in} \mathbb{N}_{1}^{v} $, $ l{\in} \mathbb{N}_{0}^{N-1}$ there exists $ \Lambda_{l|k}^{j} \in \mathbb{R}^{n_x\times n_\theta}_{\ge0}$ such that 
	\begin{subequations}\label{eq:lambdaConstraints}
		\begin{align}
		(F+GK)z_{l|k} + Gv_{l|k} + \alpha_{l|k}\bar{f} &\le \mathbf{1},\\
		-H_x z_{0|k} -\alpha_{0|k}\mathbf{1} &\le -H_x x_k ,		\label{eq:x0Constraint}\\
		\Lambda_{l|k}^{j} h_{\theta_k} + H_x d_{l|k}^{j} -\alpha_{l+1|k} \mathbf{1} &\le -\bar{w} ,\label{eq:InclusionIneq}\\
		H_x D_{l|k}^{j} &= \Lambda_{l|k}^{j} H_{\theta} \label{eq:InclusionEqual}.
		\end{align} 
	\end{subequations}
\end{proposition}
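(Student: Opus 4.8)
\emph{Proof strategy.} The plan is to treat the four conditions in \eqref{eq:lambdaConstraints} in three groups, matching the three requirements being reformulated: the state/input constraints \eqref{eq:Constraints}, the initial inclusion \eqref{eq:initialSet}, and the set inclusions \eqref{eq:SetInclusions}. In each case the idea is to turn an infinite family of constraints (over all $x$ in a tube cross-section, all $w\in\mathbb{W}$, and all $\theta\in\Theta_k$) into finitely many linear ones by (i) exploiting that $\mathbb{X}_{l|k}$ is the convex hull of the scaled and translated vertices $x^{j}_{l|k}$, so that it suffices to enforce each condition at these vertices; (ii) replacing the worst case over $w\in\mathbb{W}$ by the offline support values $\bar{w}$; and (iii) replacing the worst case over $\theta\in\Theta_k$ by linear-programming duality, which is precisely where the multipliers $\Lambda^{j}_{l|k}$ enter.

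I would first dispatch the two easy groups. For the state/input constraints, substituting the feedback law \eqref{eq:InputParameterization} gives $Fx+Gu=(F+GK)x+Gv_{l|k}$, and writing $x=z_{l|k}+\alpha_{l|k}\xi$ with $\xi\in\mathbb{X}_0$ reduces the requirement ``$(F+GK)x+Gv_{l|k}\le\mathbf{1}$ for all $x\in\mathbb{X}_{l|k}$'' to a row-wise maximum over $\xi\in\mathbb{X}_0$; by the definition of $\bar{f}$ this maximum is $\alpha_{l|k}\bar{f}$, yielding the first inequality of \eqref{eq:lambdaConstraints}. The initial inclusion \eqref{eq:initialSet}, i.e. $x_k\in\mathbb{X}_{0|k}$, is immediate from the halfspace description \eqref{eq:StateTubeParameterization}: the bound $H_x(x_k-z_{0|k})\le\alpha_{0|k}\mathbf{1}$ rearranges verbatim into \eqref{eq:x0Constraint}.

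The substantive step is the set inclusion \eqref{eq:SetInclusions}. After substituting the feedback law, the one-step map $x\mapsto A_{\text{cl}}(\theta)x+B(\theta)v_{l|k}$ is affine in $x$ for fixed $\theta$, so the image of $\mathbb{X}_{l|k}=\text{co}\{x^{j}_{l|k}\}$ is $\text{co}\{A(\theta)x^{j}_{l|k}+B(\theta)u^{j}_{l|k}\}$. Combining this with the elementary fact that $\text{co}\{p^{j}\}\oplus\mathbb{W}\subseteq C$ for a convex $C$ holds iff $p^{j}+w\in C$ for every vertex $j$ and every $w\in\mathbb{W}$, the inclusion \eqref{eq:SetInclusions} becomes: for all $j$, all $\theta\in\Theta_k$, all $w\in\mathbb{W}$, the point $A(\theta)x^{j}_{l|k}+B(\theta)u^{j}_{l|k}+w$ lies in $\mathbb{X}_{l+1|k}$. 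By the parameterization \eqref{eq:Parameterization} and the definitions \eqref{eq:x_jlk}, the successor relative to $z_{l+1|k}$ is exactly $d^{j}_{l|k}+D^{j}_{l|k}\theta+w$, so the halfspace description of $\mathbb{X}_{l+1|k}$ turns the inclusion into the scalar conditions, for each row $m$ of $H_x$,
\begin{equation*}
[H_x]_m d^{j}_{l|k}+\max_{w\in\mathbb{W}}[H_x]_m w+\max_{\theta\in\Theta_k}[H_x]_m D^{j}_{l|k}\theta\le\alpha_{l+1|k}.
\end{equation*}
The first maximum equals $[\bar{w}]_m$ by definition, while the second is a linear program over $\Theta_k=\{\theta\mid H_\theta\theta\le h_{\theta_k}\}$.

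Finally I would invoke LP duality on this inner program, whose dual is $\min\{\lambda\tr h_{\theta_k}\mid \lambda\ge0,\ \lambda\tr H_\theta=[H_x]_m D^{j}_{l|k}\}$; stacking the row multipliers $\lambda\tr$ into the rows of $\Lambda^{j}_{l|k}\ge0$ reproduces exactly \eqref{eq:InclusionEqual} together with the bound \eqref{eq:InclusionIneq}. The ``if'' direction is then just weak duality, i.e. a Farkas substitution: given $\Lambda^{j}_{l|k}\ge0$ feasible for \eqref{eq:InclusionIneq}--\eqref{eq:InclusionEqual}, for any $\theta\in\Theta_k$ one has $[H_x]_m D^{j}_{l|k}\theta=\lambda\tr H_\theta\theta\le\lambda\tr h_{\theta_k}$, which recovers the inclusion. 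The ``only if'' direction is where the main obstacle lies, since one must guarantee that a feasible multiplier exists and attains the optimal value; this follows from strong duality, which applies because $\Theta_k$ is a nonempty bounded polytope, so the inner LP is feasible and bounded and its dual optimum is attained. The only care required is that the reduction is performed facet-by-facet and vertex-by-vertex: the vertex enumeration in step (i) is precisely what linearizes the otherwise bilinear coupling between the decision variables (through $x^{j}_{l|k}$ and $u^{j}_{l|k}$) and the uncertainty $\theta$, and is what makes the duality step applicable.
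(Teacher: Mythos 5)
Your proof is correct, and it follows essentially the same route as the source this paper defers to: the paper itself gives no proof of Proposition~\ref{Pr:SetDynamics}, citing \cite{lorenzen2017}, where the argument is precisely your combination of vertex enumeration over $\text{co}\{x^{j}_{l|k}\}$, offline support values $\bar{f},\bar{w}$ for the constraint and disturbance maxima, and Farkas/LP-duality for the containment over $\Theta_k$ (with strong duality valid since $\Theta_k$ is a nonempty bounded polytope, giving attainment of the multipliers $\Lambda^{j}_{l|k}$ in the necessity direction).
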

\subsection{Terminal set}
To obtain an MPC algorithm which ensures recursive feasibility, the state tube is directed to a terminal set. The terminal constraints are imposed on $ z_{N|k} $ and $\alpha_{N|k} $ since they define the last cross section of the state tube. 
\begin{assumption}\label{As:TerminalSet}
	There exists a nonempty terminal set $  \mathbb{X}_{T} = \{(z,\alpha)\in \mathbb{R}^n{\times}\mathbb{R} | \: z {=} 0,\: \alpha {\in} [0,\bar{\alpha}] \}$, such that  for all $ \theta\in \Theta $ it holds that
	\begin{align*}
		\alpha \in [0,\bar{\alpha}] \implies &\exists \alpha^{+}\in [0,\bar{\alpha}] \: \text{s.t.}\\
		& A_{\text{cl}}(\theta) (\alpha\mathbb{X}_0) \oplus \mathbb{W}\subseteq  \alpha^+\mathbb{X}_0,\\
		\alpha \in [0,\bar{\alpha}] \implies& (x,Kx)\in \mathbb{Z} \quad \forall x\in \alpha\mathbb{X}_0.
	\end{align*}  
\end{assumption}
Assumption \ref{As:TerminalSet} implies that the set $ \mathbb{X}_{T} $ is a robust positively invariant (RPI) set for the set-dynamics in $ (z,\alpha) $, with an additional constraint that the set $ \mathbb{X}_{N|k} $ remains centered at origin. Note that Assumption \ref{As:Feedback} is a necessary condition for Assumption \ref{As:TerminalSet} to be satisfied, but they are stated separately to emphasize that the stronger assumption is only needed to implement the terminal condition. Thus, the terminal constraint for the MPC algorithm is 
\begin{equation}\label{eq:TerminalConstraint}
z_{N|k} =0 ; \quad  \alpha_{N|k} \le \bar{\alpha} .
\end{equation}

\section{Predicted state tube for exploration}
In this section predicted variables and sets are defined, which are analogous to the variables and sets defined in the previous section. Each of the predicted quantities is denoted with a hat ($ \hat{~} $) on the top. Using the predicted sets and variables, the dual effect of the input is captured by the MPC optimization problem. The control input $ u_k $ and the future parameter set $ \Theta_{k+1} $ are connected by the identification step \eqref{eq:Theta_k_update}.  By predicting the next state measurement $ \hat{x}_{1|k} $, a predicted parameter set $\hat{\Theta}_k$ is defined as a function of $ u_k $. A predicted state tube is then constructed to contain all the state trajectories generated by the predicted parameter set. 

\subsection{Predicted parameter set} 
To predict the next measurement, an estimate of the parameters is required. For this purpose, a least mean squares filter is used to calculate $\hat{\theta}_k$ as an estimate of $ \theta^* $ \cite{lorenzen2019}.  Alternative filters such as recursive least squares can also be used for this purpose. Using  $\hat{\theta}_k$, the predicted state measurement at the next time step can be written as
\begin{equation}\label{eq:xhat}
\hat{x}_{1|k} = A(\hat{\theta}_k) x_k + B(\hat{\theta}_k)u_k.
\end{equation}
Using the predicted state $ \hat{x}_{1|k} $, the future constraints on the parameter set $ \Theta_k $ are given by
\begin{align}
\begin{split}
\hat{\Delta}_k &:= \{\theta \in \mathbb{R}^{p}| \hat{x}_{1|k} {-} A(\theta) x_k {-} B(\theta) u_k \in \mathbb{W}\}, \\
				 &= \{\theta \in \mathbb{R}^{p}|-H_w D_k\theta \le h_w {-}H_w D_k \hat{\theta}_k   \}, \\
\end{split}
\end{align}
where $ u_k = K x_k + v_{0|k} $ is the first control input calculated by the MPC controller. Since $ u_k $ is the only input applied in closed loop, the predicted constraints from other inputs are not considered. A predicted parameter set $ \hat{\Theta}_k \subseteq \Theta_k $ can now be defined as
\begin{align}\label{eq:PredictedParameterSet}
\begin{split}
\hat{\Theta}_k := \Theta_k \cap \hat{\Delta}_k 
				&= \left\{\theta \in \mathbb{R}^{p} \biggr| 
					\begin{array}{rl}
					H_{\theta} \theta &\le h_{\theta_k} \\
					-H_w D_k \theta &\le  h_w -H_w D_k \hat{\theta}_k 
					\end{array} \right\} \\
			    &= \: \{\theta \in \mathbb{R}^{p} | \hat{H}_{\theta} \theta \le \hat{h}_{\theta_k} \}.
\end{split}
\end{align}
Note that $ \hat{\Theta}_k $ is dependent on the control input through the definition of $ D_k $ \eqref{eq:Dtdt}, but this dependence is omitted for clarity. This captures the effect of the control input on the identification. Figure \ref{fig:parameterSetExample} shows the parameter estimate $ \hat{\theta}_k $, the parameter sets $ \Theta_k$ and $ \hat{\Theta}_k$, and the predicted constraints. 
{\setlength\belowcaptionskip{-1ex}
\begin{figure*}
	\centering
	\begin{subfigure}[t]{.38\textwidth}
		\centering
		\begin{tikzpicture}
		
		\draw [blue,thick](0,0) rectangle (2.5,2.5);
		\node[anchor=north,blue] at (0,0){$ \Theta_k $};
		
		\draw [dashed,red,thick](0.3,3.2) -- (3.0,0.5);
		\draw [dashed,red,thick](-0.8,2.3) -- (2.0,-0.5);
		\node[anchor=north,red] at (2.5,0){$ \hat{\Theta}_k $};
		
		\fill[red!20!white] (1.5,0) -- (2.5,0)-- (2.5,1) -- (1,2.5) -- (0,2.5) -- (0,1.5) --cycle;
		
		\coordinate (x0) at (1.7,0.8);
		\filldraw (x0) circle (0.5pt) node[anchor=east] {$ \hat{\theta}_k $};
		\end{tikzpicture}
		\caption{The estimated parameter is $ \hat{\theta}_k $. The parameter set $ \Theta_k $ is bounded by the blue constraints and the dashed lines represent the predicted constraints. The shaded region shows the predicted parameter set $ \hat{\Theta}_k $.}
		\label{fig:parameterSetExample}
	\end{subfigure}\hfill
	\begin{subfigure}[t]{.59\textwidth}
		\begin{tikzpicture}
		\coordinate (x0) at (-8.0,0.5);
		\filldraw (x0) circle (0.5pt) node[anchor=east] {$ x_k $};
		
		\def \a1{0.4}
		\coordinate (a) at (-7.0,1);
		\def \ad1{0.31}
		\coordinate (ad) at (-7.05,0.95);
		\draw[blue,thick] ($ (a) + \a1*(0,-1) $) -- ($ (a) +\a1*(-.95,-.31) $) -- ($ (a) +\a1*(-.59,.81) $) -- ($ (a) +\a1*(.59,.81) $) -- ($ (a) +\a1*(.95,-.31) $) -- cycle;
		\draw[dashed,red,thick] ($ (ad) + \ad1*(0,-1) $) -- ($ (ad) +\ad1*(-.95,-.31) $) -- ($ (ad) +\ad1*(-.59,.81) $) -- ($ (ad) +\ad1*(.59,.81) $) -- ($ (ad) +\ad1*(.95,-.31) $) -- cycle;
		\node[anchor=south,red] at ($ (a) + \a1*(0,.81) $){$ \mathbb{\hat{X}}_{1|k} $};
		\node[anchor=north,blue] at ($ (a) + \a1*(0,-1) $){$ \mathbb{X}_{1|k} $};
		\filldraw[red] (ad) circle (0.5pt);
		
		\def\b1{0.8}
		\coordinate (b) at (-5.,1.25);
		\def \bd1{0.6}
		\coordinate (bd) at (-5.1,1.2);
		\draw[blue,thick] ($ (b) +\b1*(0,-1) $) -- ($ (b) +\b1*(-.95,-.31) $) -- ($ (b) +\b1*(-.59,.81) $) -- ($ (b) +\b1*(.59,.81) $) -- ($ (b) +\b1*(.95,-.31) $) -- cycle;
		\draw[dashed,red,thick] ($ (bd) + \bd1*(0,-1) $) -- ($ (bd) + \bd1*(-.95,-.31) $) -- ($ (bd) + \bd1*(-.59,.81) $) -- ($ (bd) + \bd1*(.59,.81) $) -- ($ (bd) + \bd1*(.95,-.31) $) -- cycle;
		\node[anchor=south,red] at ($ (b) + \b1*(0,.81) $){$ \mathbb{\hat{X}}_{2|k} $};
		\node[anchor=north,blue] at ($ (b) + \b1*(0,-1) $){$ \mathbb{X}_{2|k} $};
		\filldraw[red] (bd) circle (0.5pt) node[anchor=north,red] {$ \hat{z}_{2|k} $};
		
		\def\c1{0.9}
		\coordinate (c) at (-2.5,0.9);
		\draw[blue,thick] ($ (c) +\c1*(0,-1) $) -- ($ (c) +\c1*(-.95,-.31) $) -- ($ (c) +\c1*(-.59,.81) $) -- ($ (c) +\c1*(.59,.81) $) -- ($ (c) +\c1*(.95,-.31) $) -- cycle;
		\node[anchor=north,blue] at ($ (c) + \c1*(0,-1) $){$ \mathbb{X}_{3|k} $};
		\filldraw[blue] (c) circle (0.5pt) node[anchor=north,blue] {$ {z}_{3|k} $};
		
		\def\alphaT{1.7}
		\draw[blue,thick] (0,-1) -- (-.95,-.31) -- (-.59,.81) -- (.59,.81) -- (.95,-.31) -- cycle;
		\draw[ultra thick] ($ \alphaT*(0,-1) $) -- ($ \alphaT*(-.95,-.31) $) -- ($ \alphaT*(-.59,.81) $) -- ($ \alphaT*(.59,.81) $) -- ($ \alphaT*(.95,-.31) $)  -- cycle;
		\node[anchor=north,blue] at (0,-0.95){$ \mathbb{X}_{4|k} $};
		\node[anchor=east,black] at ($ \alphaT*(0,-1) $){$ \mathbb{X}_{T} $};
		\filldraw (0,0) circle (0.5pt) node[anchor=north] {O};
		\end{tikzpicture}
		\caption{The state tube $ \{\mathbb{X}_{l|k}\}_{l=1}^{4} $ is shown in blue and predicted state tube $ \{\mathbb{\hat{X}}_{l|k}\}_{l=1}^{2} $ is shown in red (dashed). The values of $ N $ and $ \hat{N} $ are 4 and 2 respectively. The depiction of the terminal set $ \mathbb{X}_T $ in $ \mathbb{R}^n $ is shown in black and contains the set $ \mathbb{X}_{N|k} $ centered at origin.}
		\label{fig:stateTubeExample}
	\end{subfigure}
	\caption{Depiction of parameter set, predicted parameter set, state tube and predicted state tube}
\end{figure*}
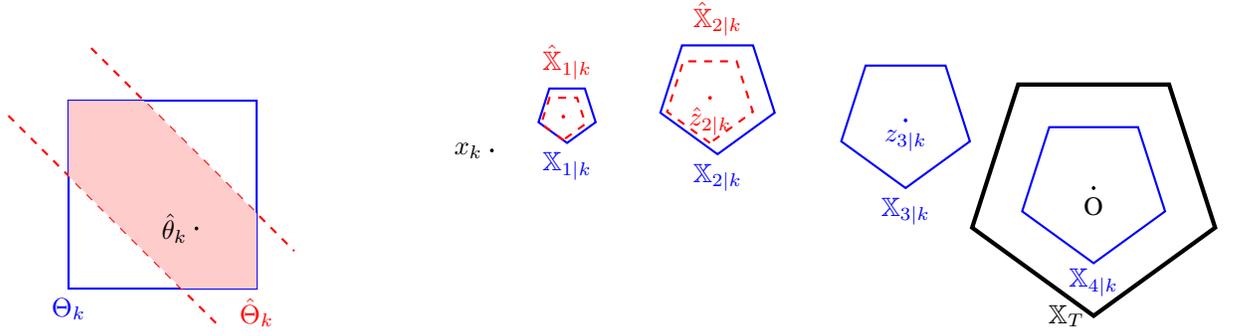
}
\subsection{Predicted state tube}
Using the set $ \hat{\Theta}_k $, a predicted state tube of length $ \hat{N} \le N $ is constructed such that $ \forall \theta \in \hat{\Theta}_k$ and $  l = \mathbb{N}_{0}^{\hat{N}-1}$
\begin{align}\label{eq:PredictedStateTube}
\begin{split}
\hat{\mathbb{X}}_{0|k} \ni \{x_{k}\}, \quad 
\hat{\mathbb{X}}_{l+1|k} \supseteq A(\theta)\hat{\mathbb{X}}_{l|k}  \oplus B(\theta)u_{l|k} \oplus \mathbb{W}.
\end{split}
\end{align}
The control input applied is the same for the set-dynamics \eqref{eq:SetDynamics} and \eqref{eq:PredictedStateTube}, while the parameter sets used are different. The evolution of the predicted state tube and the robust state tube is shown in Figure \ref{fig:stateTubeExample}. Since the parameter sets satisfy $ \hat{\Theta}_k \subseteq \Theta_k $, the predicted state tube lies within the robust state tube $ \left(\hat{\mathbb{X}}_{l|k} \subseteq \mathbb{X}_{l|k}\right) $. Each set in the predicted state tube is parameterized as $ \hat{\mathbb{X}}_{l|k} =\{\hat{z}_{l|k}\} \oplus \hat{\alpha}_{l|k}\mathbb{X}_0 $ where $ \hat{z}_{l|k} \in \mathbb{R}^n, \hat{\alpha}_{l|k}\in\mathbb{R}_{\ge0}, $ for $  l=\mathbb{N}_{0}^{\hat{N}}  $ are decision variables in the MPC optimization problem. To reformulate the predicted set-dynamics in \eqref{eq:PredictedStateTube}, the following definitions are used for $  j \in \mathbb{N}_{1}^{v} ,l \in \mathbb{N}_{0}^{\hat{N}-1} $
\setlength\arraycolsep{0pt}
\begin{equation}\label{eq:x_til_jlk}
\begin{array}{rll}
\hat{x}_{l|k}^{j} &= \hat{z}_{l|k} + \hat{\alpha}_{l|k} x^{j}, \quad &\hat{d}_{l|k}^{j} = A_0 x_{l|k}^{j} + B_0 u_{l|k}^{j} - \hat{z}_{l+1|k},  \\
\hat{u}_{l|k}^{j} &= K\hat{x}_{l|k}^{j} + v_{l|k}, \quad & \hat{D}_{l|k}^{j} = D(\hat{x}_{l|k}^{j},\hat{u}_{l|k}^{j}).
\end{array}
\end{equation}
The next proposition formulates the dynamics of the predicted state tube as constraints. The proof is similar to Proposition \nolinebreak \ref{Pr:SetDynamics} and is omitted.
\begin{proposition}\label{Pr:PredictedSetDynamics}
The predicted state tube $ \{\hat{\mathbb{X}}_{l|k}\}_{l=0}^{\hat{N}-1}$ satisfies the set-dynamics \eqref{eq:PredictedStateTube} if and only if for all $ j\in \mathbb{N}_{1}^{v} $ and $ l\in \mathbb{N}_{0}^{\hat{N}-1}$ there exists $ \hat{\Lambda}_{l|k}^{j} \in \mathbb{R}^{n_x\times (n_\theta+n_w)}_{\ge0}$ such that 
\begin{subequations}\label{eq:lambda_tilConstraints}
\begin{align}
-H_x \hat{z}_{0|k} - \hat{\alpha}_{0|k}\mathbf{1} &\le -H_x x_k, \\
\hat{\Lambda}_{l|k}^{j} \hat{h}_{\theta_k} + H_x \hat{d}_{l|k}^{j} -\hat{\alpha}_{l+1|k} \mathbf{1} &\le -\bar{w},\\
H_x \hat{D}_{l|k}^{j} &= \hat{\Lambda}_{l|k}^{j} \hat{H}_{\theta}.
\end{align}
\end{subequations} 
The constraints are bilinear in the variables since $ \hat{H}_{\theta}, \hat{h}_{\theta_k} $ are linearly dependent on the control input $ u_k $ as seen in \eqref{eq:PredictedParameterSet}.
\end{proposition}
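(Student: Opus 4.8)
The plan is to mirror the proof of Proposition \ref{Pr:SetDynamics}, the only structural change being that the constant parameter set $\Theta_k$ is replaced by the input-dependent polytope $\hat{\Theta}_k = \{\theta \mid \hat{H}_\theta\theta \le \hat{h}_{\theta_k}\}$ defined in \eqref{eq:PredictedParameterSet}. First I would dispatch the initial inclusion $\{x_k\}\subseteq\hat{\mathbb{X}}_{0|k}$: substituting the parameterization $\hat{\mathbb{X}}_{0|k}=\{\hat{z}_{0|k}\}\oplus\hat{\alpha}_{0|k}\mathbb{X}_0$ into $H_x(x_k-\hat{z}_{0|k})\le\hat{\alpha}_{0|k}\mathbf{1}$ and rearranging gives the first line of \eqref{eq:lambda_tilConstraints} directly.

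For the recursive inclusion the key observation is that $A(\theta)$ and $B(\theta)$ are linear in the state and input and $\hat{\mathbb{X}}_{l|k}=\text{co}\{\hat{x}_{l|k}^1,\ldots,\hat{x}_{l|k}^v\}$, so under the feedback law $\hat{u}_{l|k}^j = K\hat{x}_{l|k}^j + v_{l|k}$ the set $A(\theta)\hat{\mathbb{X}}_{l|k}\oplus B(\theta)u_{l|k}$ is the convex hull of the vertex images $A(\theta)\hat{x}_{l|k}^j + B(\theta)\hat{u}_{l|k}^j$. Since $\hat{\mathbb{X}}_{l+1|k}$ is convex, the inclusion \eqref{eq:PredictedStateTube} is equivalent to requiring, for every vertex $j$, disturbance $w\in\mathbb{W}$, row $i$ of $H_x$, and $\theta\in\hat{\Theta}_k$, that $[H_x]_i(A(\theta)\hat{x}_{l|k}^j + B(\theta)\hat{u}_{l|k}^j + w - \hat{z}_{l+1|k})\le\hat{\alpha}_{l+1|k}$. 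I would then eliminate $w$ via the precomputed support value $[\bar{w}]_i=\max_{w\in\mathbb{W}}[H_x]_i w$, expand the $\theta$-affine successor using the parameterization \eqref{eq:Parameterization}, and collect terms using the definitions of $\hat{d}_{l|k}^j$ and $\hat{D}_{l|k}^j$ in \eqref{eq:x_til_jlk}, reducing the condition to the robust linear inequality $[H_x]_i\hat{D}_{l|k}^j\theta \le \hat{\alpha}_{l+1|k} - [H_x]_i\hat{d}_{l|k}^j - [\bar{w}]_i$ holding for all $\theta\in\hat{\Theta}_k$.

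The core of the argument is then a polytope-containment duality step: a linear inequality $c^{\intercal}\theta\le\gamma$ holds for every $\theta$ in the nonempty bounded polytope $\{\theta\mid\hat{H}_\theta\theta\le\hat{h}_{\theta_k}\}$ if and only if there is a nonnegative multiplier row $\lambda$ with $\lambda\hat{H}_\theta = c^{\intercal}$ and $\lambda\hat{h}_{\theta_k}\le\gamma$. Sufficiency is immediate since $c^{\intercal}\theta = \lambda\hat{H}_\theta\theta \le \lambda\hat{h}_{\theta_k}\le\gamma$, and necessity follows from strong LP duality, whose hypotheses hold because $\hat{\Theta}_k\subseteq\Theta_k$ is bounded and nonempty (by construction it contains the estimate $\hat{\theta}_k$). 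Applying this to $c^{\intercal}=[H_x]_i\hat{D}_{l|k}^j$ for each row $i$ and stacking the resulting multipliers into $\hat{\Lambda}_{l|k}^j\in\mathbb{R}^{n_x\times(n_\theta+n_w)}_{\ge0}$, where the $n_\theta+n_w$ columns match the stacked description of $\hat{\Theta}_k$, produces exactly the equality $H_x\hat{D}_{l|k}^j=\hat{\Lambda}_{l|k}^j\hat{H}_\theta$ and the inequality $\hat{\Lambda}_{l|k}^j\hat{h}_{\theta_k}+H_x\hat{d}_{l|k}^j-\hat{\alpha}_{l+1|k}\mathbf{1}\le-\bar{w}$.

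The single genuine difference from Proposition \ref{Pr:SetDynamics}, and the point I would flag as the main subtlety, is that $\hat{H}_\theta$ and $\hat{h}_{\theta_k}$ depend affinely on $u_k=Kx_k+v_{0|k}$ through $D_k$. For the equivalence itself this causes no difficulty: the duality step is applied pointwise in $u_k$, so for each fixed admissible input $\hat{\Theta}_k$ is a fixed polytope and the argument above goes through verbatim. The input dependence instead surfaces only in the resulting constraint set, where the products $\hat{\Lambda}_{l|k}^j\hat{H}_\theta$ and $\hat{\Lambda}_{l|k}^j\hat{h}_{\theta_k}$ couple the multipliers with the input-dependent data and are therefore bilinear, as the statement records; establishing the \emph{iff} reformulation does not require resolving this non-convexity, only noting it.
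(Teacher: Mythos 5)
Your proof is correct and is exactly the argument the paper intends: the paper omits this proof, saying only that it is ``similar to Proposition~\ref{Pr:SetDynamics}'', and your reconstruction---reduction of the tube inclusion to vertex conditions via affinity of the closed-loop map, elimination of $w$ through the support values $\bar{w}$, and the Farkas/LP-duality step applied row-wise to the stacked polytope $\hat{\Theta}_k$, carried out pointwise in $u_k$---is precisely that standard tube-MPC argument with $\Theta_k$ replaced by $\hat{\Theta}_k$. Your explicit attention to the nonemptiness and boundedness of $\hat{\Theta}_k$ (needed for strong duality in the ``only if'' direction) and to why the input-dependence of $\hat{H}_\theta,\hat{h}_{\theta_k}$ does not disturb the equivalence are details the paper glosses over, and they are handled correctly.
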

\subsection{Predicted worst-case cost}
The cost function to be minimized is
\begin{align}\label{eq:MPCcost}
\begin{split}
J(\mathbf{v},\hat{N},N) &= \sum_{i=0}^{\hat{N}} l(\hat{\mathbb{X}}_{i|k},v_{i|k}) +\sum_{i=\hat{N}+1}^{N} l(\mathbb{X}_{i|k},v_{i|k}),
\end{split}
\end{align}
where $l(\mathbb{X},v) = \max_{x\in\mathbb{X}} ||Qx||_{\infty} + ||R(Kx+v)||_{\infty},$, and $ Q, R $ are positive definite matrices. A linear cost is chosen so that it can be reformulated using linear inequalities. The cost function is the sum of the predicted worst-case cost over the horizon $ \hat{N} $ and the worst-case cost over the remaining prediction horizon. This combination is used because propagating the predicted state tube is computationally expensive due to the bilinear constraints \eqref{eq:lambda_tilConstraints}, and results in a non-convex optimization problem. The parameter $ \hat{N} $ offers a trade-off between the computational complexity and active exploration. A higher value of $ \hat{N} $ increases the effect of a smaller parameter set $ \hat{\Theta}_k $ and thus promotes exploration. However, it also increases the number of bilinear constraints and the computational complexity. This trade-off will be exemplified in the numerical tests shown in Section \ref{sec:NumericalExample}.

\subsection{MPC algorithm}
The MPC optimization can now be defined using all the elements described above. The decision variables are 
\begin{equation}\label{eq:decisionVariables}
q_k = \left\{ 
\begin{array}{l}
\bigl\{z_{l|k},\alpha_{l|k},\{\Lambda^{j}_{l|k}\}_{j=1}^{v} \bigr\}_{l=0}^{N}, \{v_{l|k}\}_{l=0}^{N-1} ,\\
\bigl\{\hat{z}_{l|k},\hat{\alpha}_{l|k},\{\hat{\Lambda}^{j}_{l|k}\}_{j=1}^{v} \bigr\}_{l=0}^{\hat{N}}
\end{array} \right\},
\end{equation}
and the optimization problem can be written as
\setlength\arraycolsep{2pt}
\begin{equation}\label{eq:OptimizationProblem}
\begin{array}{r l}
\text{minimize}& J \eqref{eq:MPCcost} \\
\text{s.t.}& q_k \in  
\mathcal{Q}_k := \{q_k| \eqref{eq:lambdaConstraints},\eqref{eq:TerminalConstraint},\eqref{eq:lambda_tilConstraints}\},
\end{array}
\end{equation} 
where $ \mathcal{Q}_k $ represents the feasible region at time step $ k $.
\begin{algorithm} [t]
	\caption{Adaptive MPC with active exploration}\label{Alg:AMPC} 
	\begin{algorithmic}[1]
		\Statex \textbf{Offline} Choose $ K $, $ \bar{\alpha} $ and $ \mathbb{X}_0 $. Initialize $ h_{\theta_k} $ and $ \hat{\theta}_k $.
		\Statex \textbf{Online}
		\State $ k\gets 1 $
		\Repeat 
		\State Obtain the measurement $ x_k $ 
		\State Construct $ \Delta_k $ according to \eqref{eq:SimpleNonfalsified}
		\State Update $ h_{\theta_k} $ using \eqref{eq:Theta_k_LP} and compute $ \hat{\theta}_k $
		\State Solve optimization problem \eqref{eq:OptimizationProblem}
		\State Apply the control input  $ u_k = K x_k + v_{0|k}$
		\State $ k \gets k+1 $
		\Until 
	\end{algorithmic}
\end{algorithm}
The adaptive MPC algorithm with active exploration is described in Algorithm \ref{Alg:AMPC}. The values of the prestabilizing gain $K$, the state tube shape $ \mathbb{X}_0 $, terminal set bound $ \bar{\alpha}$ must be computed offline. The value of $h_{\theta_k} $ is initialized according to \eqref{eq:ParameterBounds}, and an initial guess is used for $ \hat{\theta}_k $. The following proposition establishes the control theoretic properties of the algorithm
\begin{proposition}
Let the assumptions \ref{As:Disturbance},\ref{As:Feedback} and \ref{As:TerminalSet} be satisfied and an initial feasible solution exist, that is, $ \mathcal{Q}_0 \neq \{\varnothing\} $. Then, the closed loop system using Algorithm \ref{Alg:AMPC} satisfies the following properties for all $ k > 0 $:
\begin{enumerate}[(i)]
	\item $ \theta ^* \in \Theta_k$
	\item $ \mathcal{Q}_k \neq \{\varnothing\} $
	\item $ (x_k,u_k) \in \mathbb{Z} $.
\end{enumerate}
\end{proposition}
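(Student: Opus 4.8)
The plan is to establish the three claims by a joint induction on $ k $, using (i) to guarantee that the realised trajectory stays inside the robust tube, (ii) to guarantee that a feasible input is always available, and then reading (iii) directly off the first-step constraint of any feasible solution. The base case $ k=0 $ is covered by the hypotheses: $ \theta^*\in\Theta_0=\Theta $ by \eqref{eq:ParameterBounds}, and $ \mathcal{Q}_0\neq\{\varnothing\} $ is assumed.

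For (i) I would argue inductively. The true parameter always lies in the instantaneous non-falsified set, since $ x_{t+1}-A(\theta^*)x_t-B(\theta^*)u_t=w_t\in\mathbb{W} $ by \eqref{eq:Dynamics} and Assumption \ref{As:Disturbance}, so definition \eqref{eq:SimpleNonfalsified} gives $ \theta^*\in\Delta_k $ for every $ k $. Combining the inductive hypothesis $ \theta^*\in\Theta_{k-1} $ with the update rule \eqref{eq:Theta_k_update}, which enforces $ \Theta_k\supseteq\Theta_{k-1}\cap\Delta_k $, yields $ \theta^*\in\Theta_k $.

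The core of the argument is (ii). I would first record a monotonicity fact implied by \eqref{eq:Theta_k_LP}: since each $ [h_{\theta_k}]_i $ maximises $ [H_\theta]_i\theta $ over a feasible set contained in $ \Theta_{k-1} $, we have $ [h_{\theta_k}]_i\le[h_{\theta_{k-1}}]_i $, hence $ \Theta_k\subseteq\Theta_{k-1} $. Assuming $ \mathcal{Q}_{k-1}\neq\{\varnothing\} $, I would construct an explicit feasible candidate for $ \mathcal{Q}_k $ by the standard shift-and-append procedure: set $ z_{l|k}=z_{l+1|k-1} $, $ \alpha_{l|k}=\alpha_{l+1|k-1} $ for $ l\in\mathbb{N}_0^{N-1} $ and $ v_{l|k}=v_{l+1|k-1} $ for $ l\in\mathbb{N}_0^{N-2} $, then append a terminal segment with $ z_{N|k}=0 $, $ v_{N-1|k}=0 $, and $ \alpha_{N|k}=\alpha^+ $ supplied by the robust-invariance property of Assumption \ref{As:TerminalSet} applied to $ \alpha_{N|k-1}\le\bar\alpha $. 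The initial-set constraint \eqref{eq:x0Constraint} holds because $ \mathbb{X}_{0|k}=\mathbb{X}_{1|k-1} $ and, using (i) at step $ k-1 $ with \eqref{eq:SetInclusions}, $ x_k=A(\theta^*)x_{k-1}+B(\theta^*)u_{k-1}+w_{k-1}\in\mathbb{X}_{1|k-1} $. All shifted inclusions remain valid because they were robust over $ \Theta_{k-1}\supseteq\Theta_k $, and the appended inclusion holds by Assumption \ref{As:TerminalSet}; by Proposition \ref{Pr:SetDynamics} the nonnegative multipliers $ \Lambda_{l|k}^{j} $ therefore exist.

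It remains to certify the predicted-tube constraints \eqref{eq:lambda_tilConstraints}, and here the observation is that the predicted tube never obstructs feasibility. Choosing $ \hat z_{l|k}=z_{l|k} $, $ \hat\alpha_{l|k}=\alpha_{l|k} $ for $ l\in\mathbb{N}_0^{\hat N} $ from the candidate above makes $ \hat D_{l|k}^{j}=D_{l|k}^{j} $ and $ \hat d_{l|k}^{j}=d_{l|k}^{j} $, so setting $ \hat\Lambda_{l|k}^{j}=[\,\Lambda_{l|k}^{j}\ \ 0\,] $, padded with zeros against the extra rows $ -H_wD_k $ of $ \hat H_\theta $, reproduces exactly the robust inequalities already verified (this is where $ \hat\Theta_k\subseteq\Theta_k $ is used). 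Hence $ \mathcal{Q}_k\neq\{\varnothing\} $ and the optimiser returns a feasible $ v_{0|k} $. For (iii), $ x_k\in\mathbb{X}_{0|k}=\{z_{0|k}\}\oplus\alpha_{0|k}\mathbb{X}_0 $ gives $ x_k=z_{0|k}+\alpha_{0|k}\xi $ with $ \xi\in\mathbb{X}_0 $; substituting $ u_k=Kx_k+v_{0|k} $ and using $ [F+GK]_i\xi\le[\bar f]_i $ yields $ Fx_k+Gu_k\le(F{+}GK)z_{0|k}+Gv_{0|k}+\alpha_{0|k}\bar f\le\mathbf{1} $ by the first inequality of \eqref{eq:lambdaConstraints}, i.e. $ (x_k,u_k)\in\mathbb{Z} $. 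I expect the recursive-feasibility construction to be the main obstacle, specifically verifying that the shifted tube stays robust after the feasible parameter set contracts; this is precisely where the monotonicity $ \Theta_k\subseteq\Theta_{k-1} $ and the terminal invariance of Assumption \ref{As:TerminalSet} are essential.
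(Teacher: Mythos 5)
Your proof is correct and follows essentially the same route as the paper's: induction, with (i) from non-falsification of $\theta^*$ under bounded disturbances, (ii) from the shift-and-append candidate using the terminal invariance of Assumption \ref{As:TerminalSet} together with the observation that the robust tube itself certifies the predicted-tube constraints \eqref{eq:lambda_tilConstraints} since $\hat{\Theta}_k \subseteq \Theta_k$, and (iii) read off the first-step constraint of \eqref{eq:lambdaConstraints}. You are in fact more explicit than the paper in two useful places: deriving the monotonicity $\Theta_k \subseteq \Theta_{k-1}$ from the LP update \eqref{eq:Theta_k_LP} rather than merely citing \eqref{eq:Theta_k_update}, and exhibiting the zero-padded multipliers $\hat{\Lambda}_{l|k}^{j} = [\Lambda_{l|k}^{j}\ \ 0]$ that make the robust tube feasible for the predicted set-dynamics.
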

\begin{proof}
The properties (i) and (ii) are proven by induction. Let $ \theta ^* \in \Theta_k$ for some $ k\ge0 $ which implies  $ \theta ^* \in \Delta_k$ according to \eqref{eq:Theta_k_update}. Since $ w_k \in \mathbb{W} $, the definition of the non-falsified set $ \Delta_{k+1} $ in  \eqref{eq:SimpleNonfalsified} implies $ \theta ^* \in \Delta_{k+1}$. This is because $ \Delta_{k+1} $ is constructed using the $ s-1 $ measurements used in $ \Delta_k $. Applying \eqref{eq:Theta_k_update} for the time step $ k+1 $ proves $ \theta ^* \in \Theta_{k+1}$.

Property (ii) implies recursive feasibility, i.e., if the MPC problem is feasible at the first time step, it remains feasible. For the proof, assume there exists a feasible solution at time step $ k \ge 0$. It is sufficient to find a feasible solution for the control variables $ \{v_{l|k}\}_{l=0}^{N-1} $ and state tube variables $  \{z_{l|k},\alpha_{l|k},\{\Lambda^{j}_{l|k}\}_{j=1}^{v} \}_{l=0}^{N-1} $ at the next time step to prove that the optimization problem is feasible. This is because the state tube satisfies the set-dynamics of the predicted state tube \eqref{eq:PredictedStateTube}. Consider the state tube $ \{\mathbb{X}_{l|k}\}_{l=0}^{N} $ computed at time step $ k $. Assumption \ref{As:TerminalSet} implies that the feedback controller $ u = Kx $ maps the set $ \mathbb{X}_{N|k} $ to a set $ \alpha^+ \mathbb{X}_0 $, where $\alpha^+ \in [0, \bar{\alpha}]$. Since the relation $ \Theta_k \supseteq \Theta_{k+1} $ holds according to \eqref{eq:Theta_k_update},  a feasible sequence of control inputs at the next time step is 
 \begin{equation*}
	\{v_{l|k+1}\}_{l=0}^{N-2} = \{v_{l+1|k}\}_{l=0}^{N-2},\quad v_{N-1|k+1} = 0.
 \end{equation*}
and a feasible sequence of sets defining the state tube is  
  \begin{align*}
 \{\mathbb{X}_{l|k+1}\}_{l=0}^{N-1} = \{\mathbb{X}_{l+1|k}\}_{l=0}^{N-1}, \quad \mathbb{X}_{N|k+1} = \alpha^+ \mathbb{X}_0.
 \end{align*}
 
 Property (iii) is a direct result of Proposition \ref{Pr:SetDynamics} and recursive feasibility.
\end{proof}

\section{Numerical results} \label{sec:NumericalExample}
In this section, the performance of the dual adaptive MPC (DAMPC) algorithm presented in this paper is compared to a passive adaptive MPC (PAMPC) algorithm from \cite{lorenzen2017}. The DAMPC algorithm performs active exploration using a predicted state tube, and the dependence of the performance its length $\hat{N}$ is studied. The PAMPC algorithm  uses the MPC cost function with $\hat{N}$ set to 0. The system matrices used in the simulation are given by
\begin{equation*}
\begin{array}{l l l}
A_0 = \begin{bmatrix} 0.85 &  0.5 \\ 0.2 & 0.6 \end{bmatrix}, & A_1 = \begin{bmatrix} 0.1 &  0 \\ 0 & 0.1 \end{bmatrix}, &A_2 = \begin{bmatrix} 0 &  0 \\ 0 & 0 \end{bmatrix}, \\
B_0 = \begin{bmatrix} 1 & 0.4\\ 0.2& 0.4 \end{bmatrix},  & B_1 = \begin{bmatrix} 0 &  0 \\ 0 & 0 \end{bmatrix},  & B_2 = \begin{bmatrix} 0 & 0.5\\ 0 & 0.4 \end{bmatrix}.  \\
\end{array} 
\end{equation*}
The uncertainty in the parameters is described by $ \Theta {=} \left\{\theta\in \mathbb{R}^2\bigr|\: ||\theta||_\infty \le 1 \right\}, $ with $ \theta^* = [0.95, 0.3]^\intercal $. The disturbance set is $ \mathbb{W} = \left\{w\in \mathbb{R}^2 \bigr|\: ||w||_\infty \le 0.1 \right\} $ and the state and input constraints are described by 
\begin{align*}
	\mathbb{Z} &= \left\{(x,u){\in} \mathbb{R}^{2\times 2}\left|\: \begin{array}{rl}
	||x||_\infty &\le 10 \\
	-0.5\le [u]_1 \le 1, &-2\le [u]_2 \le 2 \\ 
	\end{array}\right.  \right\}.
\end{align*}
The initial state of the system is $ x_0 = [1,1.5]^\intercal $. In both PAMPC and DAMPC, the state tube is constructed by translating and scaling the set $ \mathbb{X}_0 = \left\{x\in \mathbb{R}^2 \bigr|\: ||x||_\infty \le 1 \right\} $. The bounded complexity update of $ \Theta_k $ is performed using $ n_\theta = 58$ hyperplanes which are initially chosen as outer bounds of the set $ \Theta $. The cost matrices are given as $ Q =R= \mathbb{I}_{2\times2} $, the prestabilizing gain used is 
\begin{equation*}
	K = \begin{bmatrix}
	-0.5625 & 0 \\ 0 &0
	\end{bmatrix},
\end{equation*}
and the corresponding terminal set bound $ \bar{\alpha} $ is 0.89. The prediction horizon chosen is $ N = 8 $ time steps for all the algorithms. Two different values of the $ \hat{N} $ are used, and the corresponding adaptive MPC schemes are referred to as DAMPC$_2 $ and DAMPC$_5 $ for $ \hat{N}=2,5 $ respectively. The DAMPC schemes are initialized at $ \hat{\theta}_0 = [0.5,0.5]^\intercal $. 

\begin{figure}
	\includegraphics[scale=0.9]{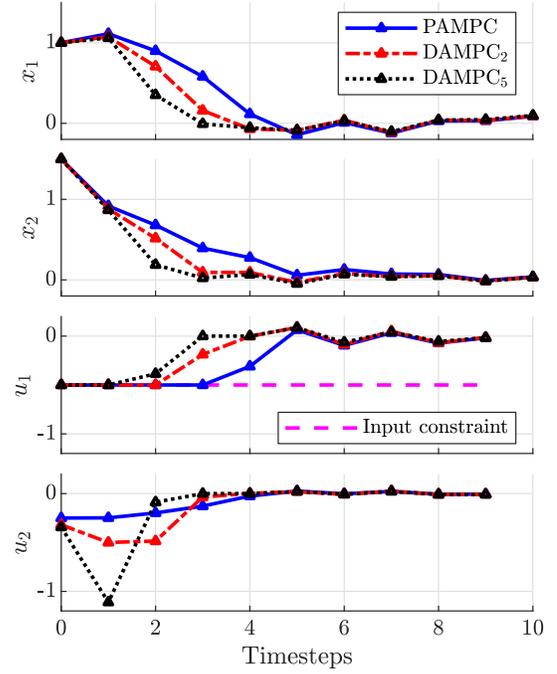}
	\centering
	\caption{Closed loop trajectories achieved under PAMPC and two DAMPC schemes with predicted state tube length $ \hat{N} = 2,5 $.}
	\label{fig:trajCL}
\end{figure}
\begin{figure}
\includegraphics[scale=0.9]{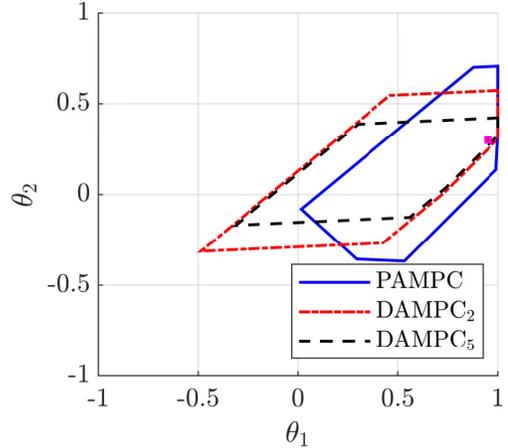}
\centering
\caption{Parameter sets obtained after running the adaptive MPC schemes for 10 timesteps. The square represents the true parameter. }
\label{fig:parameterSet}
\end{figure}
\begin{figure}
\includegraphics[scale=0.9]{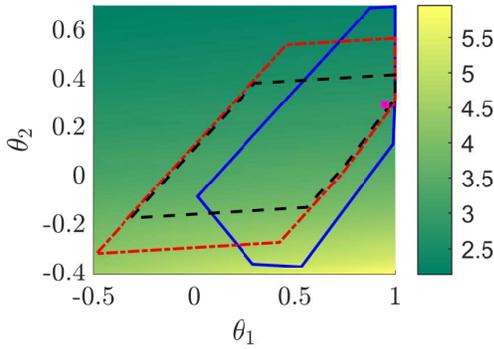}
\centering
\caption{Colormap showing the distribution of closed loop costs of MPC controllers as a function of the parameters. The legend is same as in Figure \ref{fig:parameterSet}, and is omitted for readability.}
\label{fig:NominalCost}
\end{figure}
The closed loop trajectories using each of the controllers are shown in Figure \ref{fig:trajCL}. It can be seen that active exploration improves the regulation performance. The PAMPC scheme achieves a closed loop cost of 6.04, while DAMPC$_2$ and DAMPC$_5$ achieve 4.49 (25\% lower) and 4.21 (30\% lower) respectively. The coefficients of control input $ [u]_2 $ have high uncertainty, and the PAMPC algorithm does not excite this input since the MPC optimizer within does not explicitly include the benefit of online identification. However, the DAMPC algorithms use a higher value of $ [u]_2 $ which improves the identification and reduces the closed loop cost. The updated uncertainty set $ \Theta_k $ of each scheme after 10 time steps is shown in Figure \ref{fig:parameterSet}. Even though the uncertainty sets for DAMPC$_2$ and PAMPC have similar size, the DAMPC$_2$ algorithm has a lower uncertainty in the parameter $ [\theta]_2 $ which has a stronger influence on the performance. This can be interpreted using Figure \ref{fig:NominalCost}, which shows the closed loop cost of an MPC controller specifically designed for each plant in the uncertainty set, plotted as a function of the corresponding parameters. Since the goal is to investigate the exploratory actions of the three aforementioned controllers, only the region around the uncertainty sets depicted in Figure \ref{fig:parameterSet} is considered. Figure \ref{fig:NominalCost} reveals the relative difficulty in controlling the systems, and thus motivates why some regions of parameter space are removed from the uncertainty set $ \Theta_k $ rather than the others. The figure shows that compared to the DAMPC$_2 $ controller, the PAMPC controller results in a parameter set associated with worse performance. This is because the predicted worst-case cost function induces exploration so as to remove the systems difficult to control from the future parameter set. Additionally, it can be seen that using a larger $ \hat{N} $ improves exploration. The DAMPC$_5$ algorithm has the smallest uncertainty set, while also having the least closed loop cost. The cost-reduction offered by DAMPC schemes is achieved at the price of computational complexity. The simulations were performed on a laptop using Intel i7-8550U 1.8 GHz processor, and the optimization problems were setup using YALMIP \cite{Lofberg2004} and solved using IPOPT \cite{wachter2006implementation}.  The average solver time for the optimization problem in PAMPC was 0.042s, while that of DAMPC$_2$ and DAMPC$_5$ were 0.89s and 1s respectively. A similar trend was observed for the performance and solver times with different values of the predicted state tube length. 
\section{Conclusion}
A dual adaptive MPC scheme was presented for systems with parametric uncertainty in state-space matrices. The algorithm uses online set-membership identification to reduce the uncertainty in the parameters and a tube MPC approach to ensure robust constraint satisfaction. A predicted state-tube is used to capture the effect of the future control inputs on identification, and a predicted worst-case cost is optimized. The resulting optimization problem in the MPC is non-convex, but offers the flexibility to trade-off the computational complexity with performance. The algorithm ensures recursive feasibility and consistency of the parameter set, and performs better compared to a passive adaptive MPC approach from literature while regulating a system. 
\bibliographystyle{ieeetr}
\bibliography{bibliography}             

\end{document}